\newdimen\colL
\newdimen\colR
\newcommand{\Z}{\mathbb{Z}}
\renewcommand{\Pr}[1]{{\mathop{\text{Pr}}\left[#1\right]}}
\newcommand{\Zn}{\Z_n}
\newcommand{\Znstar}{\Z_n^*}
\newcommand{\Zpstar}{\Z_p^*}
\def\multiset#1#2{\left(\!\left({#1\atopwithdelims..#2}\right)\!\right)}
\newtheorem{theorem}{Theorem}
\newtheorem{lemma}[theorem]{Lemma}
\newtheorem{corollary}[theorem]{Corollary}
\newtheorem{proposition}[theorem]{Proposition}
\newtheorem{definition}[theorem]{Definition}
\newtheorem{example}[theorem]{Example}
\begin{document}

\begin{frontmatter}

\title{Baby-Step Giant-Step Algorithms for the Symmetric Group}

\author{Eric Bach}
\address{University of Wisconsin-Madison}
\ead{bach@cs.wisc.edu}
\ead[url]{http://pages.cs.wisc.edu/~bach/}

\author{Bryce Sandlund}
\address{University of Wisconsin-Madison}
\ead{sandlund@cs.wisc.edu}
\ead[url]{http://pages.cs.wisc.edu/~sandlund/}

\begin{abstract}
We study discrete logarithms in the setting of group actions. Suppose that $G$
is a group that acts on a set $S$. When $r,s \in S$, a solution $g \in G$ to
$r^g = s$ can be thought of as a kind of logarithm. In this paper, we study
the case where $G = S_n$, and develop analogs to the Shanks baby-step /
giant-step procedure for ordinary discrete logarithms. Specifically, we
compute two sets $A, B \subseteq S_n$ such that every permutation of $S_n$ can be
written as a product $ab$ of elements $a \in A$ and $b \in B$. 
Our deterministic procedure is optimal up to constant factors, in the sense that $A$ and $B$ can
be computed in optimal asymptotic complexity, and $|A|$ and $|B|$ are a small constant from 
$\sqrt{n!}$ in size. We also analyze randomized ``collision'' algorithms for the same problem.
\end{abstract}

\begin{keyword}
Symmetric group, group actions, discrete logarithm, collision algorithm, computational group theory.
\end{keyword}
\end{frontmatter}

\section{Introduction}
Collision algorithms have been used to obtain polynomial 
(typically square root) speedups since the advent of computer science. 
Indeed, there are even collision ``algorithms'' in the world of
analog measurement \citep{AK}. Most collision algorithms
exploit time-space tradeoffs, arriving at a quicker algorithm by 
storing part of the search space in memory and utilizing an efficient
lookup scheme. One of the 
most famous of these collision-style methods is Shanks's 
baby-step giant-step procedure for the discrete 
logarithm problem \citep{SHANKS}.

Traditionally,
the discrete logarithm problem is the problem of finding an integer $k$ 
such that $b^k = g$, where $b$ and $g$ are elements of a finite cyclic 
group of order $n$ and $b$ is a generator (has order $n$). 
Then there is exactly one $k \in \{1,\ldots,n\}$ such that $b^k = g$.
Shanks's baby-step giant-step algorithm then 
writes $k = im + j$ with $m = \lceil{\sqrt{n}}\rceil$ 
and $0 \leq i, j \leq m$ and looks for a collision in the equation:
$$
g(b^{-m})^i = b^j.
$$
By precomputing values of $b^j$ (or $b^{-mi}$) and storing them in a hash table, 
a collision can be found in $O(\sqrt{n})$ time and $O(\sqrt{n})$ space, recovering the solution $k$.

Various extensions of the baby-step giant-step algorithm have been developed, 
mostly focusing on discrete logarithm problems in groups that are important
to cryptography.  For the classic problem, \cite{POL} contributed two elegant
methods that also exploit collision, but use very little space.  (They
have yet to be rigorously analyzed, in their original form.)
For more information about these algorithms, and more efficient methods that
apply to specific groups of an arithmetic nature, we refer to
surveys by \cite{MCC} and \cite{TES}.  

Ideas similar to the baby-step giant-step algorithm have been used
on 0-1 integer programming problems.  (This seems to be folklore.)
Suppose we want to solve $Ax=b$, where $x$ is a 0-1 column vector.
If we let $x_1$ and $x_2$ be half-length column vectors, and split
$A$ down the central column into $A_1$ and $A_2$, we can use collision
to solve $A_1 x_1 = b - A_2 x_2$.  (Here, we exploit not a group
structure, but rather, the associative law for matrix-vector multiplication.)
For a recent application, see \cite{ETH}.

In this paper, we focus on a different discrete log generalization
that can be stated as follows.

\begin{definition}
Suppose that $G$ is a finite group that acts on a set $S$.
Denote by $r^g$ the action of $g \in G$ on $r \in S$. Then, given elements
$r, s \in S$, the group action discrete logarithm problem is the problem of finding
a $g$ such that $r^g = s$.
\end{definition}


The first step beyond brute force search for this problem is 
to design an analog to the Shanks method. 
We will find appropriate splitting sets $A, B \subseteq G$ so that
for any $s$ in the orbit $r^G = \{r^g : g \in G\}$, we have
$r^{ab} = s$ for some $a \in A$ and $b \in B$. A match in the two sets
$$
\{r^a : a \in A\} \hbox{ and } \{s^{b^{-1}} : b \in B\}
$$
recovers the solution $g = ab$.

In this work, we treat two situations.  The first is one of maximum
generality: we know almost nothing about the structure of $G$, and can
only work with it by applying it to elements of $S$.  The second
is maximally specific: $G$ is the symmetric group $S_n$.
In neither case do we assume any particular knowledge about the orbit of $r$ or $s$.

For general groups (the first case), we analyze randomized methods 
that achieve square root speedups when compared to
the naive approach of exhaustive search.
For the important second case, we develop deterministic algorithms 
that utilize the structure of $S_n$.  These algorithms have
close to square root complexity.

To motivate our model and concentration on $S_n$, we give several
applications that fit into our framework.

\section{Applications}

We first show how group actions
lead to an unconventional algorithm for the graph isomorphism
problem (GI).
Let $S$ be the set of adjacency matrices for graphs
on $n$ vertices.  The symmetric group $S_n$ acts on $S$, via
$M^g = P_g M P_g^{-1}$.  ($P_g$ is just the permutation matrix
for $g$.)  In this case, the group action discrete logarithm problem
is exactly graph isomorphism: given adjacency matrices $M$ and $N$,
find $g \in S_n$ to make $M^g = N$, or determine no such $g$ exists.  Using our
results, we arrive at a deterministic graph isomorphism procedure
with run time about $\sqrt{n!}$.

Although there are much faster algorithms than this, they
are either conceptually involved \citep{BKL, B16}, cannot 
guarantee efficient performance
in all cases \citep{CFI}, or both \citep{NAUTY}.
The baby-step giant-step algorithm, on the other hand, 
gives an immediate proof that exhaustive search through permutations is 
not the best method for graph isomorphism.

There are a variety of other GI-related problems that also fit in the discrete log
symmetric group action framework. In particular, hypergraph isomorphism
 and equivalence of permutation groups via conjugation can both be
formulated as symmetric group actions. Furthermore, the latter problem has no
known moderately exponential ($\exp (n^{1-c})$ for $c > 0$) algorithm \citep{B16}.



We further note that in cryptography, the solution of iterated block ciphers \citep{MH} is
closely related to a group-action discrete logarithm problem in a symmetric group.
Our approach may also be useful in computational Galois theory,
specifically, in computing the splitting field of a polynomial \citep{ORY}.

Because our approach is orbit-oblivious, our framework is very general.
In some problems, however, algorithms aware of orbit restrictions may
benefit significantly by reducing the number of $g \in G$ to consider. This is true,
for example, in the graph isomorphism problem. An orbit-sensitive GI algorithm can utilize
the fact that a vertex can only be mapped to another vertex of the same degree, whereas
our approach will test every permutation.
On the other hand, there seem to be problems where one cannot exploit such orbit restrictions.

In our conference paper, we developed such an example using homomorphic encryption.
The idea was that if we take a group action discrete logarithm problem and convert the objects
$r$, $s$, and the action function $r^g$, to ciphertext $\mathcal{E}(r)$, $\mathcal{E}(s)$,
and an action function on ciphertext $\mathcal{E}(r)^g$, then no orbit restrictions can be determined
on the encrypted objects. Unfortunately, there is a slight issue with this example, because encryption and functions on
ciphertext are allowed to have different outputs for the same input; in fact, the security of such
cryptosystems require it. If this is the case, we can't compare ciphertexts for equality or use efficient
lookup schemes on ciphertext. The basic idea of the example still works, however, and we can
get around the issues by not explicitly using a homomorphic encryption system.

Instead, imagine that you are given two ordered arrays of $n$ indistinguishable objects with unique identities and a hash function
that maps each possible ordered array of $n$ objects uniquely to a positive integer. 
The problem is to compute the permutation that moves the objects of the first array into the same order
as the second, if such a permutation exists, using only the hash function on permutations of the first and second
arrays.
The symmetric group acts by
permuting arrays of indistinguishable objects, 
and we may store the integer given by the hash function for each
permuted array to determine a collision. In this case, this is all we can do, since each permutation of the
objects is indistinguishable and the associated integers say nothing about the particular
permutation represented. 
Our results show that we may solve this problem deterministically in time
and space $O(n \sqrt{n!})$.

Although the above problem is only theoretical, it is likely a realization exists with
the correct notion of indistinguishability and hash function that respects this notion.
Solving such group action discrete logarithm
problems by means of collision reduces to efficient construction of the sets $A$ and $B$.
In the remainder of this paper, we shift our focus to the construction of these sets, rather than
the mechanics of the algorithm itself. But before we do this,
we give a hardness result that shows the group action discrete logarithm problem is as hard
as the traditional discrete logarithm.

\section{Hardness}
In this section, we reduce the discrete logarithm problem to the group
action discrete logarithm problem.

Let $C_n$ be the cyclic group of order $n$, with generator $a$.  
The traditional discrete logarithm problem is the following.  Given
$b \in C_n$, find an $x \in \Zn$ for which $a^x = b$.  (Note that
$C_n$ is written multiplicatively.) This is not a group action 
discrete logarithm problem, because the $x$-th power map may not be 1-1:
consider $x=0$.

Every group, however, is acted on by its group of automorphisms.
In this way, we get an action of the group $\Znstar$ on $C_n$.
We now show how to reduce the traditional discrete logarithm
problem to computation of a group action discrete logarithm.

A randomized reduction is easiest to describe.  Given $a,b$
as above, choose a random integer $r$ with $0 \le r < n$, and
set $b' = b a^r$.  With probability $\Omega(1/(\log\log n))$,
the (unknown) discrete log $x$ of $a$ will satisfy
$x+r \in \Znstar$, thereby making
the randomly shifted element $ba^r$ a generator of
$C_n$.  A solution to the group action discrete log problem 
$a^y = b'$ then gives us $x$, as $y-r$.

This reduction can be made deterministic.  Recall that
Jacobsthal's function $J(n)$ denotes the maximum distance
between consecutive units of $\Zn$.  (Distance is reckoned along the cycle,
so that, for example, $\pm 1$ are 2 apart mod 7, giving $J(7) = 2$.)
\cite{HI} showed that $J(n) = O(\log n)^2$, although sharper
bounds have been conjectured, for example by \cite{RV}.
(No explicit bound that is polynomial in $\log n$ seems to be known,
but \cite{CW} come close: their estimate is
$(\log n)^{O(\log\log\log n)}$.) Therefore, we can replace the
random guess $r$ by a search through $r=1,2,\ldots$\ .
For one of these choices, $ba^r$ will generate $C_n$.  We use,
at most, $O(\log n)^2$ values of $r$.

For many applications in cryptography, $n$ is chosen to be
prime, and this case is of interest to us as well.  Here, there are
only two ``types'' for the value $x$.  As an element of $\Zn$,
$x$ is either 0, which we can test by comparing $b$ to 1, or relatively
prime to $n$.  So, the traditional discrete log problem in $C_n$ with $n$
prime reduces immediately to a group action discrete log problem.

Using these reductions, we can draw conclusions about the hardness
of the group action discrete log problem. \cite{VS} showed
that any generic probabilistic algorithm to solve the 
discrete log problem in $C_n$ must do $\Omega(\sqrt p)$ operations,
where $p$ is the largest prime factor of $n$.  
Informally, by a generic algorithm, we mean one that interacts with
the group only by doing group operations and equality tests.  (See
Shoup's paper for a precise definition.)  Earlier, \cite{VN}
had proved a similar result for a large class of deterministic algorithms.
Now, let us imagine that
we have a generic algorithm $Y$ that can solve the discrete log
problem $a^x = b$ in $C_n$, but only when $a$ and $b$ are generators.
(If the condition isn't fulfilled, the algorithm fails.)
In effect, $Y$ solves a group action discrete logarithm problem for
the case $G= \Znstar$ and $S = C_n$, under a certain promise about
the inputs.  Using our reductions, we can extend $Y$ to get a
generic algorithm $Y'$ that solves traditional discrete log problems,
with polynomial (in $\log n$) overhead.  The running time of $Y$
must therefore obey Shoup's lower bound: it must use $\Omega(\sqrt p)$
group operations.

Since generic algorithms can be randomized, a consequence of this
is that the black-box algorithm based off of section 4 is best possible,
since it applies to $S = C_p$ and $G = \Zpstar$.

\section{A Randomized Approach}
In this section our approach is general enough to work over any group $G$ 
where random elements can be generated efficiently. For the special case when $G = S_n$, 
this is possible through random shuffling procedures. 
For arbitrary permutation groups, this is also possible with an 
additive poly($n$) overhead, where $n$ is the degree of the 
permutation group \citep{SERESS}. The algorithmic approach to random sampling
 of an arbitrary (permutation) group is discussed more thoroughly in \cite{SERESS} and \cite{GC}.

In this setting, an obvious idea is to simply pick $k$ random elements of $G$ for the set $A$ 
and $k$ random elements of $G$ for the set $B$. 
Then, the probability a particular $g \in G$ is present in $AB$ will depend on the value of $k$. 
For ease of notation, let $m = |G|$. We have:

\begin{proposition}
Suppose we pick $k$ random elements of $G$ \textbf{without} replacement for the set $A$ and likewise for $B$. Then the probability a particular $g \in G$ is present in $AB$ satisfies:
\[
\Pr{g \in AB} \ge 1 - e^{-k^2/m}.
\]
\end{proposition}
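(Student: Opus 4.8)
The plan is to reduce the event $g \in AB$ to a set-intersection event and then control it with a hypergeometric tail bound. First I would rewrite the membership condition: $g \in AB$ means there exist $a \in A$ and $b \in B$ with $ab = g$, equivalently $b = a^{-1}g$. Hence $g \in AB$ holds if and only if $B$ meets the set $A^{-1}g := \{a^{-1}g : a \in A\}$. Since the map $a \mapsto a^{-1}g$ is a bijection of $G$, the set $A^{-1}g$ consists of exactly $k$ distinct elements whenever $|A| = k$. This reformulation is the crux of the argument, and is the one step I would expect to require the most care: the whole bound hinges on observing that $A^{-1}g$ has full size $k$ and that the analysis does not actually depend on the internal structure of $A$.

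Next I would use that $A$ and $B$ are drawn independently. Conditioning on any outcome of $A$, the set $T := A^{-1}g$ is a fixed $k$-element subset of the $m$-element group $G$, while $B$ remains a uniformly random $k$-subset chosen without replacement. The probability that $B$ misses $T$ entirely is the hypergeometric quantity $\binom{m-k}{k}/\binom{m}{k}$, which depends only on $m$ and $k$ and not on which particular set $T$ arose; it therefore equals the unconditional probability $\Pr{g \notin AB}$. Drawing the elements of $B$ one at a time expresses this as
\[
\Pr{g \notin AB} = \prod_{i=0}^{k-1} \frac{m-k-i}{m-i}.
\]

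Finally I would bound this product. Each factor equals $1 - k/(m-i) \le e^{-k/(m-i)}$ by the inequality $1 - x \le e^{-x}$, and since $m - i \le m$ for $0 \le i \le k-1$ we have $\sum_{i=0}^{k-1} 1/(m-i) \ge k/m$. Multiplying the factors then gives $\Pr{g \notin AB} \le e^{-k^2/m}$, and taking complements yields $\Pr{g \in AB} \ge 1 - e^{-k^2/m}$, as claimed. The remaining steps after the reformulation are routine, so no genuine obstacle should arise once the reduction to $B \cap A^{-1}g \neq \emptyset$ is in place.
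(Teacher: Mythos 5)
Your proposal is correct and follows essentially the same route as the paper: both reduce $g \notin AB$ to the event that $B$ avoids the $k$-element set $\{a^{-1}g : a \in A\}$, write the resulting without-replacement probability as the product $\prod_{i=0}^{k-1}\bigl(1 - k/(m-i)\bigr)$, and bound it by $e^{-k^2/m}$. The only cosmetic difference is in the last step, where you apply $1-x \le e^{-x}$ to each factor separately, while the paper first bounds every factor by $1-k/m$ and then uses $(1-k/m)^m \le e^{-k}$; the two finishes are interchangeable.
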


\begin{proof}
Observe that $g \not\in AB$ precisely when each $b \in B$ 
avoids the set $\{a^{-1} g : a \in A \}$. The probability
of this event is
\[
\left( 1 - \frac k {m} \right)
\left( 1 - \frac k {m-1} \right)
\cdots
\left( 1 - \frac k {m-k+1} \right),
\]
which is at most $( 1 - k/m)^k$.
Rewriting the exponent and
utilizing that $(1-x/n)^n \le e^{-x}$ gives us
$$
\Pr{ g \notin AB } 
\le \left( \left(1-\frac{k}{m}\right)^{m} \right)^{k/m}
\le e^{-k^2/m},
$$
from which the claim follows.
\end{proof}

By setting $k = \Theta(\sqrt{m})$, we can make the probability that
$g$ is present in $AB$ constant.  Our analysis, however,
assumed sampling without replacement. If we simply sample with replacement
and redraw when a duplicate is found, it is not hard to see that as long
as we are sampling $o(m)$ elements, the number of extra draws is $O(1)$
in expectation.


Note that with this approach,
there will always be a non-zero probability that some group elements 
are missing in $AB$, which will lead to one-sided error in our algorithm.
Namely, if no $g$ is found where $r^g = s$, the randomized procedure only gives
probabilistic evidence that no $g$ exists. Furthermore, checking for 
missing elements of $G$ in $AB$ takes $O(|G|)$ time. While this would only 
need to be done once and work for any set $G$ acts on, 
it is prohibitively expensive. 

This leads us to ask for a deterministic algorithm to construct
the sets $A$ and $B$. This question inherently asks about structure of the
group $G$ that can be exploited, similarly to the original Shanks method for
the traditional discrete logarithm. Therefore, we will focus on the special
case when $G = S_n$.

\section{Background on Groups}

Our deterministic algorithm will rely on some elementary group theory. 
For this, we state a few necessary results and definitions.

All groups in this paper will be finite.  If $K$ is a subgroup
of $G$, we write $K \le G$, and $K < G$ if the containment is 
proper. We do not assume that $K$ is normal in $G$.

When $K \le G$, its left cosets are the $|G|/|K|$ sets
$gK$ with $g \in G$.  Right cosets are defined similarly.



Note that the set of left (or right) cosets of $K$ in $G$ forms
a partition of $G$.
Thus if we have a subgroup $K$ of $G$, we can take 
$B = K$ and $A$ to be a set of elements of 
$G$ such that each left coset of $K$ in $G$ is represented in $A$.
Then every element of $G$
will be present in $AB$.

In group theory, a minimal perfect set $A$ of this kind is called
a transversal.

\begin{definition}[Transversal]
A left (right)
transversal $T$ of a subgroup $K$ of $G$ is a set of 
elements of $G$ such that each left (right) coset of $K$ in $G$ 
has exactly one representative in $T$. Thus, $T$ is a minimal set of 
coset representatives of $K$ in $G$.
\end{definition}

To make this definition clear, we give the following:

\begin{example}
Let $G = \Z_{n^2}$ and let $K$ be the unique subgroup 
of $G$ with $n$ elements.  Then $|G : K| = n$.
If $G = \{0, \ldots, n^2-1\}$,
$K = \{0, n, 2n, \ldots, (n-1)n\}$. 
One transversal of $K$ in $G$ is 
$T = \{0, 1, 2, \ldots, n-1\}$.
\end{example}

For this example, $K$ and $T$ are exactly the sets of giant steps 
and baby steps that the Shanks algorithm would use.  However, transversals
are not unique; for example, we could have taken $T$ to be any
complete set of representatives modulo $n$.

Combinatorially, a subgroup $B$ and its left transversal $A$ form a perfect
splitting set for $G$, in the sense that every $g \in G$ is uniquely
of the form $ab$, for $a \in A$ and $b \in B$.  Perfect splitting sets
need not be subgroups, as we could always replace $A$ by $Ax$ and
$B$ by $x^{-1} B$, choosing $x \in G$ at will. 

Ideally, these splitting sets have cardinality exactly $\sqrt{|G|}$.
However, $n!$ is never a square for $n>1$, so such perfect splitting sets 
cannot exist for $G = S_n$. Therefore, we will either have to
tolerate duplicated products (as we did in the last section),
or look for set sizes close to, but not exactly matching, $\sqrt{n!}$.

We now give some concepts that provide
a ``data structure'' for working with permutation groups:

\begin{definition}[Base]
Let $G$ act on $\Omega$.  A base $B$ for $G$ is an ordered subset of $\Omega$ 
(i.e. a list) with the following property:
the only element of $G$ that stabilizes everything in $B$ 
is the identity.
\end{definition}

For our purposes, $G = S_n$, which stabilizes no element of 
$\Omega = \{1,\ldots,n\}$.
So, we will use $B := [1, 2, \ldots, n]$; that is, $B = \Omega$ with the natural ordering
of the integers. 
We could choose to not include any single integer from $B$, 
since the action of a permutation on the missing integer can be inferred 
via its action on the other elements; however, it will be easier to 
describe the transversal algorithm with a more complete base, and 
no loss of efficiency will be incurred. A base provides 
a convenient form to represent elements of $G$:

\begin{definition}[Base Image]
If $g \in G$ and
$B = [\beta_1, \beta_2, \ldots, \beta_k]$ is a base for $G$, 
then $B^g := [\beta_1^g, \ldots, \beta_k^g]$ 
is called the base image of $g$ (relative to $B$). 
\end{definition}

Recall that $\beta_i^g$ means the result of applying the 
group element $g$ to the object $\beta_i \in \Omega$.

With the base $B := [1, 2, \ldots, n]$, the base image gives the 
typical vector notation of a permutation. The base image
$B^g$ uniquely determines the element $g \in G$.




\section{Bidirectional Collisions}

In recent work, \cite{DR} described a general framework 
to create deterministic collision algorithms for isomorphism problems.
In his dissertation, Rosenbaum
does not initially give a way to apply this to permutation search, instead using a simpler
problem to demonstrate the technique
\citep[p. 196]{DR}. We are not aware that the application of bidirectional collision detection
to permutation search is known, so we describe and analyze
what we believe is the most natural application in this section.

Let $X$ and $Y$ be two objects that we wish to test for isomorphism.
In bidirectional collision detection, potential isomorphisms are represented by root-to-leaf
paths in a tree. The individual labelling of nodes at a particular level is not known, and is in general
different between $X$ and $Y$.
We choose one set of paths that reach
all nodes halfway down the tree, extending to leaves arbitrarily,
and apply these to $X$.  We choose one path to an arbitrary
node halfway down the tree, extend it in all possible ways, and then apply each
of these to $Y$. The shared path represents the isomorphism between $X$ and $Y$.

\begin{figure}[!htb]
\label{searchtree}
\begin{center}
\begin{tikzpicture}[scale=.9]
\tikzstyle{vertex}=[circle,draw,minimum size=15pt,inner sep=0pt]
\tikzstyle{edge} = [draw]
\tikzstyle{weight} = [font=\small]

\node[vertex] (n1) at (-5.5, 0) {};
\node[vertex] (n2) at (-4.5, 0) {};
\node[vertex] (n3) at (-3.5, 0) {};
\node[vertex] (n4) at (-2.5, 0) {};
\node[vertex] (n5) at (-1.5, 0) {};
\node[vertex] (n6) at (-0.5, 0) {};
\node[vertex] (n7) at (0.5, 0) {};
\node[vertex] (n8) at (1.5, 0) {};
\node[vertex] (n9) at (2.5, 0) {};
\node[vertex] (n10) at (3.5, 0) {};
\node[vertex] (n11) at (4.5, 0) {};
\node[vertex] (n12) at (5.5, 0) {};

\node[vertex] (n13) at (-4.5, 1.5) {};
\node[vertex] (n14) at (-1.5, 1.5) {};
\node[vertex] (n15) at (1.5, 1.5) {};
\node[vertex] (n16) at (4.5, 1.5) {};

\node[vertex] (n17) at (0, 3) {};

\path[edge,blue,thick] (n13) -- (n17) {};
\path[edge,blue,thick] (n15) -- (n17) {};
\path[edge,blue,thick] (n16) -- (n17) {};

\path[edge,blue,thick] (n1) -- (n13) {};
\path[edge,blue,thick] (n9) -- (n15) {};
\path[edge,blue,thick] (n10) -- (n16) {};

\path[edge,green,thick] (n5) -- (n14) {};
\path[edge,green,thick] (n14) -- (n17) {};

\path[edge,red,thick] (n4) -- (n14) {};
\path[edge,red,thick] (n6) -- (n14) {};

\path[edge] (n2) -- (n13) {};
\path[edge] (n3) -- (n13) {};
\path[edge] (n7) -- (n15) {};
\path[edge] (n8) -- (n15) {};
\path[edge] (n11) -- (n16) {};
\path[edge] (n12) -- (n16) {};

\end{tikzpicture}
\caption{An example search tree for bidirectional collision detection.
The set of paths applied to $X$ are in blue,
to $Y$ are in red, and the shared path is in green.}
\end{center}
\end{figure}
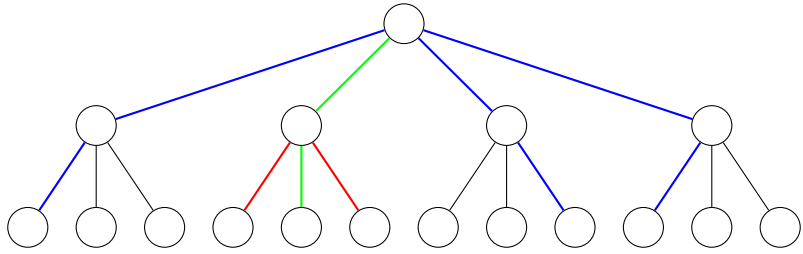

If we denote the set of transformations applied to $X$ as
$C$ and to $Y$ as $D$, then, borrowing the group action notation,
this framework finds paths $c \in C$ and $d \in D$ so that
$$
X^c = Y^d
$$
whenever there exists some $g$ such that
$$
X^g = Y.
$$
Thus, this approach writes $g = cd^{-1}$, for some $c \in C$ and $d \in D$.
This means the sets $C$ and $D^{-1}$ form a factorization of the search space in
the same way as our sets $A$ and $B$. Note that by $D^{-1}$, we mean the set of
all inverses of the elements in $D$.

To split the search space for permutations, the first level of our search tree will
correspond to where we send the integer $1$, the second the integer $2$, and so on.
We choose a $k$ with 
$1 \le k \le n$ as our halfway point.
The set $C$ will consist of $(n)_k$ ($n$ pick $k$) permutations. 
Each permutation sends $1, 2, \ldots, k$ to all possible $k$-tuples in $\{1, 2, \ldots, n\}$.
We can choose arbitrarily where to send $k+1, \ldots, n$.

To make $D$, we choose an
image tuple for the first $k$ elements arbitrarily (not moving them at all 
will do) and then extend with all $(n - k)!$ possible suffixes.

Since $|C| = (n)_k$ and $|D| = (n-k)!$, the counting functions for
these two sets are not interchangable.  However, we can try to balance
the values of $(n)_k$ and $(n-k)!$.  Since $(n)_k \cdot (n-k)! = n!$,
finding a $k$ such that $(n)_k \approx (n-k)!$ will,
at least approximately, minimize $(n)_k + (n-k)!$, the cost 
of the collision algorithm.  We will show that the ``right'' value
of $k$ is roughly, but not exactly, $n/2$.  Surprisingly, 
the performance of the algorithm is very sensitive to $k$.

To proceed, we must obtain asymptotics for the solution $x$
in $x! = \sqrt{n!}$. We will use $x!$ as an abbreviation for $\Gamma(x+1)$ and
$\log$ as the natural logarithm.

\begin{proposition}
\label{xval}
For integer $n \ge 1$, let $x$ be the positive real solution to $x! = \sqrt{n!}$.
Then
\[
x = \frac n  2 \left(1 + O\left(\frac 1 {\log n}\right) \right).
\]
\end{proposition}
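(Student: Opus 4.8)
The plan is to take logarithms, reducing $x! = \sqrt{n!}$ to $h(x) = \tfrac12 h(n)$ where $h(t) := \log\Gamma(t+1)$. On the range that matters ($t \ge 1$, where $\Gamma(t+1)$ is past its minimum) $h$ is continuous and strictly increasing, and for $n \ge 2$ the target $\tfrac12 h(n)$ forces $x > 1$, so the positive solution $x$ is unique and well-defined. I would then locate it by sandwiching: it suffices to produce a lower and an upper endpoint at which $h$ lands on either side of $\tfrac12 h(n)$, and to read off the gap between them.

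For the lower endpoint I would evaluate $h$ at $n/2$. Applying Stirling's formula $h(t) = t\log t - t + \tfrac12\log(2\pi t) + O(1/t)$ to both $h(n)$ and $h(n/2)$ and subtracting, the $t\log t$ and linear terms combine to leave
\[
\tfrac12 h(n) - h(n/2) = \frac{n\log 2}{2} + O(\log n).
\]
In particular this is positive for large $n$, so monotonicity gives $x > n/2$. This computation is really the crux: the surviving leading term $\tfrac{n\log 2}{2}$ is exactly the ``defect'' produced by halving $h(n)$ rather than halving its argument, and it is what pushes the root above $n/2$ by an amount of order $n/\log n$ rather than by something smaller.

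For the upper endpoint I would take $x_+ = \tfrac n2 + \tfrac{Cn}{\log n}$ and estimate $h(x_+) - h(n/2)$ by the mean value theorem, using $h'(t) = \psi(t+1) = \log t + O(1/t)$. Since any intermediate point $\eta \in (n/2, x_+)$ satisfies $\eta = \Theta(n)$, we have $h'(\eta) = (\log n)(1 + o(1))$, whence
\[
h(x_+) - h(n/2) = h'(\eta)\,(x_+ - n/2) = Cn\,(1 + o(1)).
\]
Choosing $C$ a fixed constant large enough (any $C > \tfrac12\log 2$ works) makes this exceed $\tfrac12 h(n) - h(n/2)$ from the previous step, so $h(x_+) > \tfrac12 h(n)$ and hence $x < x_+$.

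Combining the two bounds gives $n/2 < x < n/2 + Cn/\log n$, that is $x = \tfrac n2\bigl(1 + O(1/\log n)\bigr)$, as claimed; note that only an upper bound on $x - n/2$ is needed, since $x > n/2$ already handles the sign. The only genuine technical care lies in bookkeeping the Stirling and digamma error terms cleanly enough to isolate the constant $\log 2$ in the lower-endpoint step; everything else is monotonicity together with a single application of the mean value theorem.
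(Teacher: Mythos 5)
Your proof is correct, and while it traps $x$ in the same window $[n/2,\; n/2 + O(n/\log n)]$ as the paper does, both halves of your sandwich are executed quite differently. For the lower bound, the paper avoids Stirling entirely: convexity of $\log\Gamma$ gives $\left(\left(\frac n2\right)!\right)^2 \le \lfloor n/2\rfloor!\,\lceil n/2\rceil!$, and integrality of the central binomial coefficient gives $\lfloor n/2\rfloor!\,\lceil n/2\rceil! \le n!$, so $x \ge n/2$ holds exactly for every $n$, not just asymptotically as in your Stirling computation. For the upper bound, the paper uses only the crude enveloping bounds $(z/e)^z \le z! \le z^z$ to get $x(\log x - 1) \le \frac n2 \log n$, then feeds the lower bound $x \ge n/2$ back into the denominator to obtain $x \le \frac n2 \cdot \frac{\log n}{\log n - \log 2 - 1}$ by pure algebra --- no mean value theorem, no digamma function, no guessed candidate point $x_+$. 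What your heavier machinery buys is precision: by isolating the defect $\frac12 h(n) - h(n/2) = \frac{n\log 2}{2} + O(\log n)$ and the derivative $h'(\eta) \approx \log n$, your argument makes visible exactly why the correction term is $\frac{\log 2}{\log n}\cdot\frac n2$, which is the refinement $x = \frac n2\left(1 + \frac{\log 2}{\log n} + O\left((\log n)^{-2}\right)\right)$ that the paper only obtains afterwards by ``bootstrapping.'' The paper's route, by contrast, is more elementary and self-contained, at the cost of an unexplained constant in the $O(1/\log n)$ term. One small caveat on your version: since your lower endpoint relies on asymptotics, your sandwich only holds for sufficiently large $n$, which is fine for the $O(\cdot)$ statement but loses the paper's clean non-asymptotic inequality $x \ge n/2$.
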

\begin{proof}
Since log factorial is convex \citep[p. 13]{AAR}, and the central binomial 
coefficient is a positive integer, we have
$$
{\left({\frac n 2}\right)!}^2
\le  \left \lfloor \frac n 2 \right \rfloor !  \cdot \left \lceil \frac {n}{2} \right \rceil ! 
\le n! .
$$
Therefore, $n/2 \le x \le n$.
From the enveloping property of Stirling's series \citep[pp. 252-253]{WW}, we 
deduce that
$$
    \left({\frac z e}\right)^z \le z! \le z^z 
$$
holds for $z \ge 1$.  This implies
$$
x \left( \log x - 1 \right) \le \log x! \le \frac n 2 \log n,
$$
so 
$$
x \le \frac n 2 \frac{\log n}{\log x - 1}
  \le \frac n 2 \frac{\log n}{\log n - \log 2 - 1}
  =  \frac n 2 \left( 1 + O\left( \frac 1 {\log n} \right) \right).
$$
Since $x \ge n/2$ as well, the claimed result follows.
\end{proof}

We can get more precise results than the above by taking an accurate form
of Stirling's formula, such as $\log z! = z \log z - z + O(\log z)$,
and then ``bootstrapping'' \citep{GK}.  For example,
$$
x = \frac n 2 \left( 1 + \frac {\log 2} {\log n}
               + O\left( (\log n)^{-2} \right) \right).
$$
This is already pretty good: if $n=10$,
$x \approx 6.509$, whereas 
$(n/2) ( 1 + \log 2 / \log n ) \approx 6.505$.

We now return to determining the best value of $k$ for bidirectional collision detection
applied to permutation search, and analyze the performance of the technique with this choice.

\begin{proposition}
Bidirectional collision detection applied to permutation search finds $A,B \subseteq S_n$ 
such that $AB = S_n$ and 
$\max(|A|, |B|)$ $=$ $\Theta( n^{1/2} \sqrt{n!})$.
\end{proposition}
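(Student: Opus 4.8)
The plan is to reparametrize and reduce the statement to a one-dimensional optimization over the integer $u = n-k$. Since $|A| = (n)_k = n!/(n-k)! = n!/u!$ and $|B| = (n-k)! = u!$, we have $\max(|A|,|B|) = \max\!\left(u!,\ n!/u!\right)$, so the cost is governed by $\min_{0 \le u \le n} \max(u!,\ n!/u!)$. Because $u!\cdot(n!/u!) = n!$ the product is fixed, giving the trivial bound $\max(|A|,|B|) \ge \sqrt{n!}$ for every $k$; the entire content of the statement is the extra factor $n^{1/2}$. As $u$ grows, $u!$ increases while $n!/u!$ decreases, so the two meet at the real point $u = x$ where $u! = \sqrt{n!}$, which Proposition~\ref{xval} places at $x = (n/2)(1+O(1/\log n))$. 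Hence the optimal integer $u$ is $\lfloor x\rfloor$ or $\lceil x\rceil$, and $\min_u \max(u!,\ n!/u!) = \min\!\left(n!/\lfloor x\rfloor!,\ \lceil x\rceil!\right)$. I would first note $x$ is not an integer: $n!$ is not a perfect square for $n>1$, so $\sqrt{n!}$ is irrational and cannot equal an integer factorial; thus $\lceil x\rceil = \lfloor x\rfloor+1$.

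For the upper bound I would use a geometric-mean trick that avoids any delicate estimate of the fractional part of $x$. Writing $a = n!/\lfloor x\rfloor!$ and $b = \lceil x\rceil! = (\lfloor x\rfloor+1)\,\lfloor x\rfloor!$, the product telescopes:
\[
a\,b = \frac{n!}{\lfloor x\rfloor!}\,(\lfloor x\rfloor+1)\,\lfloor x\rfloor! = \lceil x\rceil\cdot n!.
\]
Since $\min(a,b) \le \sqrt{ab}$ and $\lceil x\rceil \le x+1 \le n+1$ (using $x \le n$ from Proposition~\ref{xval}), we obtain $\max(|A|,|B|) = \min(a,b) \le \sqrt{(n+1)\,n!} = O\!\left(n^{1/2}\sqrt{n!}\right)$, the required upper bound.

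The main obstacle is the matching lower bound, where the factor $n^{1/2}$ must be shown to be forced. The quantitative heart of the matter is that consecutive achievable values $u!$ and $(u+1)!$ near the crossover differ by $u+1 \approx n/2$; equivalently, the derivative of $\log u! = \log\Gamma(u+1)$ is the digamma function $\psi(u+1) = \log(n/2) + O(1)$ throughout the window $u \in [\lfloor x\rfloor,\ \lceil x\rceil]$. Integrating this estimate gives
\[
a = \Theta\!\left((n/2)^{\{x\}}\right)\sqrt{n!}, \qquad b = \Theta\!\left((n/2)^{\,1-\{x\}}\right)\sqrt{n!},
\]
so $\min(a,b) = \Theta\!\left((n/2)^{\min(\{x\},\,1-\{x\})}\right)\sqrt{n!}$. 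The delicate step is controlling the fractional part $\{x\}$: because $\sqrt{n!}$ falls strictly between two factorials a factor $\approx n/2$ apart, rounding to either side costs a factor that is $\Theta(\sqrt{n})$ precisely in the worst case $\{x\}\approx 1/2$, yielding $\max(|A|,|B|) = \Omega(n^{1/2}\sqrt{n!})$ and completing the $\Theta$. I expect this fractional-part analysis to be the crux: one must argue that $\{x\}$ is bounded away from $0$ and $1$ for the relevant $n$, since the bound genuinely improves toward $\Theta(\sqrt{n!})$ for the exceptional $n$ where $x$ lies within $O(1/\log n)$ of an integer, so the clean $\Theta$ is fundamentally a worst-case guarantee over the balanced choices of $k$.
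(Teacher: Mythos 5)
Your proposal is correct to exactly the extent that the paper's own proof is correct, and your upper bound goes by a genuinely different, more elementary route. The paper rounds $x$ to the nearest integer, writes $n-k = x+\alpha$ with $|\alpha|\le 1/2$, and invokes the Gamma-function asymptotic $(x+\alpha)! \sim x!\,x^{\alpha}$ together with $x \sim n/2$ from Proposition~\ref{xval} to get $(n-k)! \sim \sqrt{n!}\,(n/2)^{\alpha}$ and $(n)_k \sim \sqrt{n!}\,(n/2)^{-\alpha}$, from which the bound follows. Your telescoping geometric-mean argument --- $ab = \lceil x\rceil \cdot n!$ for $a = n!/\lfloor x\rfloor!$ and $b = \lceil x\rceil!$, hence $\min(a,b) \le \sqrt{ab} \le \sqrt{(n+1)\,n!}$ --- needs no Gamma-function asymptotics at all, only $x \le n$, and in particular does not rely on Proposition~\ref{xval} for the $O(n^{1/2}\sqrt{n!})$ direction. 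What the paper's heavier machinery buys in return is the explicit form $\sqrt{n!}\,(n/2)^{\pm\alpha}$ of the two set sizes, which the authors reuse in the discussion that follows (the comparison with the exact split $k = n/2$, and the speculation about worst-case $\alpha$).

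On the lower bound, your diagnosis is accurate and worth stating plainly: neither you nor the paper proves the $\Omega(n^{1/2}\sqrt{n!})$ direction, because it cannot hold for every $n$ without controlling the fractional part of $x$. Your quantity $\min(\{x\},\,1-\{x\})$ is exactly the paper's $|\alpha|$, and the paper's final step (``the bound on $|\alpha|$ gives $\Theta(\cdot)$'') in fact yields only the upper bound; when $\alpha$ is near $0$ the cost drops toward $\sqrt{n!}$. The authors concede this immediately after their proof, remarking that the factor $n^{1/2}$ is only ``probably'' best possible, on the numerical evidence that $\alpha$ varies irregularly in $(-1/2,1/2)$ and likely approaches $\pm 1/2$ along some subsequence. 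So your reading --- a proved $O(n^{1/2}\sqrt{n!})$, a trivial $\Omega(\sqrt{n!})$, and a worst-case/heuristic claim filling the gap --- is the correct account of what the proposition actually establishes, and your proposal is, if anything, more explicit than the published proof about where the unproved step lies.
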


\begin{proof}
Let $x$ be the positive real solution to $x! = \sqrt{n!}$.
Since $x!$ increases for $x \ge 1$, there is an integer $m$ such that
\[
(m-1)! \leq x! = \sqrt{n!} \leq m!.
\]
Choose $k$ so that $n-k$ 
is the closest integer (either $m$ or $m-1$) to $x$.  This will cause
one of our sets to be larger than its ``ideal'' value $\sqrt{n!}$, 
and we must estimate this disparity.  Let
$$
n-k = x + \alpha, \qquad \hbox{ with } |\alpha| \le 1/2.
$$
Recall that $(x+\alpha)! \sim x! x^\alpha$ as $x \rightarrow \infty$,
in the sense that the limiting ratio is 1.  Using this,  and
our asymptotic expression for $x$ from Proposition \ref{xval}, we have
$$
(n-k)! = (x+\alpha)! \sim x! x^{\alpha} 
\sim \sqrt{n!} \left( \frac n 2 \right)^\alpha.
$$
Similarly,
$$
(n)_k = \frac {n!}{(n-k)!} \sim \frac{n!} {\sqrt{n!} (n/2)^\alpha}
      = {\sqrt{n!} (n/2)^{-\alpha}}.
$$
The bound on $|\alpha|$ gives
$$
\max(|A|,|B|) =
\max( (n)_k, (n-k)! ) = \Theta\left( n^{1/2} \sqrt{n!} \right).
$$
\end{proof}
   
Probably, the factor $n^{1/2}$ is best possible.  Examination
of numerical data shows that $\alpha$ varies irregularly within
the interval $(-1/2,1/2)$, and we see no reason why this behavior
should not continue.  In particular, there is likely to be an
infinite sequence of $n$'s on which $\alpha$'s limiting value
is $1/2$.

It is also interesting to compare our procedure to one
that splits the permutation vectors exactly in half, 
as Rosenbaum initially recommends \cite[p. 193]{DR}. 
This splitting
amounts to taking $k=n/2$, and as a consequence of
Stirling's formula,
$$
(n)_{n/2} = \Theta(n^{-1/4} 2^{n/2} \sqrt{n!})
$$
when $n$ is even.  Therefore, the ``overhead'' for exact splitting 
is exponential in $n$. It is not hard to see this intuitively,
by considering a decision tree for generating permutations. Each
of the ``large'' branching factors $n-i$, $0 \le i < n/2$, is
roughly twice as large as its ``small'' counterpart $n/2 - i$.
By choosing the best splitting fraction for each $n$,
we have reduced this overhead to a small power of $n$.

Finally, we note that this approach indeed employs a subgroup
and a corresponding transversal of that subgroup. The set $D$
fixes the first $k$ elements and permutes the remaining $n-k$ 
elements amongst themselves in all possible ways.  So $D$
is isomorphic to $S_{n-k}$. Because of this, $D^{-1} = D$. 
Then, since $CD = S_n$ and $|C||D| = n!$, $C$ is a perfect set of
left coset representatives of $D$ in $S_n$, i.e., a left transversal of $S_{n-k}$
in $S_n$.

\section{A Better Subgroup}

In the conference version of this paper, we improved upon bidirectional
collision detection by finding a subgroup of $S_n$ of size
$\Theta(n^{\pm 1/4} \sqrt{n!})$, with the plus or minus sign depending on the parity of $n$.
With its corresponding transversal, this leads to sets $A$ and $B$ of size 
$\max(|A|,|B|) = \Theta(n^{1/4} \sqrt{n!})$.
In this version, we find a subgroup of size $\Theta(\sqrt{n!})$, leading to
sets $A$ and $B$ of size $\max(|A|,|B|) = \Theta(\sqrt{n!})$, which is optimal up to constant factors.

Our conference paper relied on a particular subgroup that was very close to $\sqrt{n!}$ in size.
Our new approach will be more similar to the previous section. In particular, instead of
 choosing a specific
subgroup for each $n$, we will choose a subgroup from a set of options with sizes roughly evenly
(geometrically) distributed from $1$ to $n!$.
To improve upon the $O(\sqrt{n})$ size gap given by bidirectional
collision detection, we will need to pick our subgroup from a larger set of options.

One very simple idea is to take the symmetric group $S_k$ for $0 \leq k \leq n$ and add an
$\ell$-cycle on the remaining $n-k$ integers into the generating set, where $\ell$ may
range from $0$
to $n-k$. When $\ell = 0$ or $1$, we take the subgroup to be $S_k$; further, when $k = 0$ or $1$,
we take the subgroup to be the powers of the $\ell$-cycle.
Without loss of generality, we assume either $k$ or $\ell$ is greater than $0$.

We will show that for large enough $n$, there exists
a choice of $k$ and $\ell$ that produces a subgroup of this structure with size
very close to $\sqrt{n!}$.

\begin{lemma}
\label{subgroup}
Denote by $\sigma$ the $\ell$-cycle $(k+1\ k+2\ldots\ k+\ell)$ and by 
$H$ the subgroup of $S_n$ generated by $S_k$ and $\sigma$.
Here, we require $0 \le k \le n$ and $0 \le \ell \le n-k$. WLOG, we assume $k + \ell > 0$.
Then $|H| = k!$ when $\ell = 0$ or $1$ and $|H| = \ell k!$ when $\ell \geq 2$.
Explicitly, the subgroup $H$ is generated by
\[H = \left\{\begin{array}{cl}
	\langle (1\ 2), (1\ 2\ldots\ k), (k+1\ k+2\ldots\ k+\ell) \rangle &: 
	          \mathrm{if}\ \ell \geq 2\ \mathrm{and}\ k \ge 2\\
	\langle (1\ 2), (1\ 2\ldots\ k) \rangle &: 
	          \mathrm{if}\ \ell = 0\ \mathrm{or}\ 1\ \mathrm{and}\ k \ge 2\\
	\langle (2\ 3\ldots \ell+1) \rangle &:
	          \mathrm{if}\ k = 1.\\
	\langle (1\ 2\ldots \ell) \rangle &:
	          \mathrm{if}\ k = 0.\\
\end{array} \right. \]

\end{lemma}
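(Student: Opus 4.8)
The plan is to exploit the fact that $S_k$ and the cyclic group $\langle \sigma \rangle$ act on \emph{disjoint} sets of points: every element of $S_k$ moves only points in $\{1,\ldots,k\}$ and fixes $\{k+1,\ldots,n\}$ pointwise, while $\sigma$ moves only points in $\{k+1,\ldots,k+\ell\}$ and fixes everything else. This disjointness is the engine behind the whole computation, and once it is in hand the order formula drops out almost immediately as an internal direct product.

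First I would record two consequences of disjoint support. Since permutations with disjoint support commute, every element of $S_k$ commutes with $\sigma$, so $S_k$ and $\langle\sigma\rangle$ centralize each other inside $H$; in particular $S_k\langle\sigma\rangle$ is already a subgroup and equals $H = \langle S_k, \sigma\rangle$. Second, a nonidentity element of $S_k$ must move some point of $\{1,\ldots,k\}$, whereas a nonidentity power of $\sigma$ moves only points of $\{k+1,\ldots,k+\ell\}$; as these ranges are disjoint, $S_k \cap \langle\sigma\rangle = \{e\}$. Together these say exactly that $H$ is the internal direct product $S_k \times \langle\sigma\rangle$, whence $|H| = |S_k|\cdot|\langle\sigma\rangle| = k!\cdot\mathrm{ord}(\sigma)$.

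It then remains to evaluate $\mathrm{ord}(\sigma)$. For $\ell \ge 2$ the cycle $\sigma$ genuinely has length $\ell$ and hence order $\ell$, giving $|H| = \ell\,k!$; for $\ell \in \{0,1\}$ the ``cycle'' is the identity permutation, so $\langle\sigma\rangle$ is trivial and $|H| = k!$. This reproduces the two stated cases. To justify the explicit generating sets I would invoke the standard fact that the transposition $(1\ 2)$ together with the full cycle $(1\ 2\ldots k)$ generate $S_k$ whenever $k \ge 2$, so that $\langle (1\ 2),\,(1\ 2\ldots k),\,\sigma\rangle = \langle S_k, \sigma\rangle = H$.

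The boundary rows of the displayed table are then bookkeeping: when $k \le 1$ the group $S_k$ is trivial and $H = \langle\sigma\rangle$ reduces to the single listed cycle, and when $\ell \le 1$ the generator $\sigma$ is redundant so $H = S_k$. I do not expect any single deep obstacle here; the core argument is the one-line disjoint-support (hence direct-product) observation. The only thing requiring genuine care is the exhaustive handling of the small cases $k,\ell \in \{0,1\}$ — checking that each row of the generating-set table is consistent with the order formula $k!$ or $\ell\,k!$, and confirming that the constraint $k+\ell>0$ rules out the empty generating set so that $H$ is always well defined.
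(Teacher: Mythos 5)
Your proposal is correct and follows essentially the same route as the paper: both hinge on the disjointness of the supports of $S_k$ and $\sigma$, handle the small cases $k,\ell \in \{0,1\}$ separately, and invoke the standard generating set $\langle (1\ 2), (1\ 2\ldots k)\rangle$ for $S_k$. If anything, your write-up is slightly more rigorous, since you make explicit the internal direct product $H = S_k \times \langle\sigma\rangle$ (trivial intersection plus elementwise commuting), whereas the paper only notes that composing elements of $S_k$ with powers of $\sigma$ yields $\ell\,k!$ distinct elements and leaves the matching upper bound on $|H|$ implicit.
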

\begin{proof}
Recall that $S_k$ is generated by $(1\ 2)$ and $(1\ 2\ldots\ k)$.

When $k = 0$ or $1$, $H$ is the powers of the $\ell$-cycle $\sigma$, so
$H = \langle (1\ 2\ldots \ell) \rangle$ if $k = 0$ and $H = \langle (2\ 3\ldots \ell+1) \rangle$ if $k = 1$.

When $\ell = 0$ or $1$ and $k \geq 2$, $S_k$ is our entire subgroup, therefore 
$H = \langle (1\ 2), (1\ 2\ldots\ k) \rangle$, and (even if $k = 0$ or $1$) $|H| = k!$.

Now assume $\ell \geq 2$ and $k \ge 2$.
The $\ell$-cycle $\sigma = (k+1\ k+2\ldots\ k+\ell)$ is disjoint from $S_k$, so we may compose any element
of $S_k$ with a power of $\sigma$ to produce a new element of $H$. Thus, in this case,
$H = \langle (1\ 2), (1\ 2\ldots\ k), (k+1\ k+2\ldots\ k+\ell) \rangle$.
Since there are $\ell$
distinct powers of $\sigma$, when $\ell \geq 2$, $|H| = \ell k!$.
\end{proof}

Now suppose we are looking for a subgroup of $S_n$ of size $m$. Assume $m < n!$. If $m = n!$, we may use
$S_n$ itself. Let $x$ be the largest integer such that
\[
x! \leq m < (x+1)!.
\]
We will show the following.

\begin{theorem}
\label{HDistribution}
We may find a subgroup $H$ of $S_n$ of size within a factor of
\[
\max \left( \sqrt{2}, \sqrt{\frac{x + 1}{n - x}} \right)
\]
of $m$.
\end{theorem}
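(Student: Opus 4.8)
The plan is to use $k = x$ as the default parameter and tune the size with the cycle length $\ell$, switching to $k = x+1$ only when $\ell$ saturates. By Lemma~\ref{subgroup}, fixing $k = x$ and letting $\ell$ range over $0,1,\ldots,n-x$ yields subgroups whose sizes are exactly the multiples $1\cdot x!,\, 2\cdot x!,\, \ldots,\, (n-x)\cdot x!$ (the values $\ell = 0$ and $\ell = 1$ both giving $x!$, and $\ell \ge 2$ giving $\ell\cdot x!$). Writing $t = m/x!$, the defining property of $x$ gives $t \in [1, x+1)$, so the problem reduces to a one-dimensional approximation on a geometric scale: approximate $t$ by an available integer multiplier, or, when $t$ is too large, by the next factorial above it.

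First I would dispatch the bulk regime $1 \le t \le n-x$, where $t$ is bracketed by consecutive available multipliers, $\ell \le t \le \ell+1$ with $1 \le \ell$. Choosing whichever of $\ell\,x!$ and $(\ell+1)\,x!$ is multiplicatively closer to $m$, the resulting factor is at most $\sqrt{(\ell+1)/\ell}$, which is maximized over $\ell \ge 1$ at $\ell = 1$, giving $\sqrt 2$. This worst case is exactly the coarsest gap in the scale, between multipliers $1$ and $2$, and it accounts for the $\sqrt 2$ term.

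The main obstacle is the top regime $t > n-x$, which can occur only when $n-x < x+1$. Here the multiplier saturates at $\ell = n-x$, so the largest size reachable with $k = x$ is $(n-x)\,x!$, undershooting $m$ by the factor $t/(n-x)$. The remedy is to also consider $k = x+1$ with $\ell \in \{0,1\}$, i.e.\ the subgroup $S_{x+1}$ of size $(x+1)! = (x+1)\,x!$, which overshoots $m$ by the factor $(x+1)/t$. Taking the better of these two candidates, the worst case of $\min\bigl(t/(n-x),\,(x+1)/t\bigr)$ over $t \in (n-x,\,x+1)$ occurs at the crossover $t = \sqrt{(x+1)(n-x)}$, where both ratios equal $\sqrt{(x+1)/(n-x)}$. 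This balancing of the two approximations is the crux of the argument.

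Finally I would combine the regimes to obtain the overall factor $\max\bigl(\sqrt 2,\ \sqrt{(x+1)/(n-x)}\bigr)$, together with the boundary bookkeeping that keeps every choice legitimate: $m < n!$ forces $x \le n-1$, so $n-x \ge 1$ and $k = x+1 \le n$ are both valid (the latter using $S_n$ itself when $x = n-1$), while $m \ge 1$ forces $x \ge 0$. When $n-x \ge x+1$ the top regime is vacuous and $\sqrt{(x+1)/(n-x)} \le 1$, so the bound collapses to $\sqrt 2$, consistent with the stated maximum.
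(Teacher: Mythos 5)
Your proposal is correct and follows essentially the same route as the paper: both arguments use the ladder of subgroup sizes $x!,\,2x!,\,\ldots,\,(n-x)\,x!,\,(x+1)!$ from Lemma~\ref{subgroup} and observe that consecutive ratios are bounded by $2$ and $(x+1)/(n-x)$, so $m$ lies within the square root of the worst ratio of some available size. Your version merely makes explicit the two-regime case analysis and boundary checks that the paper's proof treats implicitly.
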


\begin{proof}
By Lemma \ref{subgroup}, we may find a subgroup $H$ for any $x+\ell \leq n$.
Thus, we may find a subgroup of size $\ell x!$ for $1 \leq \ell \leq n-x$ and $(x+1)!$.
Consider the sizes of these subgroups and the ratios of sizes between consecutive subgroups:
\begin{center}
\includegraphics[scale=.43]{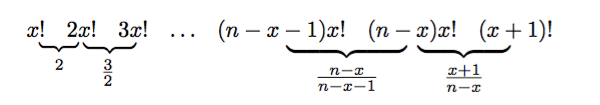}
\end{center}
The extremal ratios are $2$ and $(x+1)/(n-x)$. Let
\[
\alpha = \max \left( \sqrt{2}, \sqrt{\frac{x + 1}{n - x}} \right).
\]
Then $m$ will reside within a factor of $\alpha$ from the size of one of the above subgroups;
that is, we may find a subgroup $H$ with
\[
\frac{m}{\alpha} \leq |H| \leq \alpha m.
\]
\end{proof}

Using Theorem \ref{HDistribution}, we may prove the following Corollary 
for our specific choice of $m = \sqrt{n!}$.

\begin{corollary}
\label{optimal}
We may find a subgroup $H$ of $S_n$ of size $\frac{\sqrt{n!}}{\sqrt{2}} \leq |H| \leq \sqrt{2n!}$ for $n \geq 7$; that is, we may find a subgroup $H$ of size $\Theta(\sqrt{n!})$ 
in $S_n$.
\end{corollary}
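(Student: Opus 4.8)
The plan is to invoke Theorem \ref{HDistribution} directly with the choice $m = \sqrt{n!}$ and then argue that the approximation factor collapses to the constant $\sqrt{2}$. Recall that the factor supplied by that theorem is $\alpha = \max\left(\sqrt{2}, \sqrt{(x+1)/(n-x)}\right)$, where $x$ is the largest integer with $x! \le m = \sqrt{n!}$. Thus the entire corollary reduces to showing that the second term never dominates, i.e.\ that
$$
\frac{x+1}{n-x} \le 2 .
$$
Since $x$ will turn out to be roughly $n/2$, the denominator $n-x$ is positive, so this rearranges cleanly to $x \le (2n-1)/3$. Once this bound holds, Theorem \ref{HDistribution} gives $\sqrt{n!}/\sqrt{2} \le |H| \le \sqrt{2n!}$ verbatim, and the $\Theta(\sqrt{n!})$ conclusion is immediate.

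First I would pin down the size of $x$. By definition $x$ is the integer part of the real solution $\tilde{x}$ to $\tilde{x}! = \sqrt{n!}$ studied in Proposition \ref{xval}: because $\Gamma(z+1)$ is increasing on the relevant range, the largest integer $x$ with $x! \le \tilde{x}!$ is exactly $\lfloor \tilde{x} \rfloor$, so $x \le \tilde{x}$. Proposition \ref{xval} supplies both $\tilde{x} \ge n/2$ and the explicit upper bound $\tilde{x} \le \frac{n}{2} \cdot \frac{\log n}{\log n - \log 2 - 1}$. Since $x \le \tilde{x}$, it suffices in the large-$n$ regime to check that this analytic upper bound is at most $(2n-1)/3$; as the bound behaves like $\frac{n}{2}\left(1 + O(1/\log n)\right)$ while $(2n-1)/3 \sim \frac{2n}{3}$, and $\frac{1}{2} < \frac{2}{3}$, the inequality holds once $n$ is sufficiently large.

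The main obstacle is that Proposition \ref{xval}'s explicit bound is far too weak for small $n$ — its denominator $\log n - \log 2 - 1$ is tiny near $n = 6$ — so the analytic argument alone cannot cover the whole range $n \ge 7$. I would therefore split into cases: a finite block of small $n$ handled by directly computing $x$ and verifying $(x+1)/(n-x) \le 2$, and the remaining large $n$ handled by the bound above after fixing an explicit crossover point. The threshold $n \ge 7$ is genuinely sharp. For $n = 6$ one has $\sqrt{6!} \approx 26.8$, forcing $x = 4$ (since $4! = 24 \le 26.8 < 120 = 5!$) and hence $(x+1)/(n-x) = 5/2 > 2$, so the factor would be $\sqrt{5/2}$ rather than $\sqrt{2}$. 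At $n = 7$, by contrast, $\sqrt{7!} \approx 71$ still gives $x = 4$ but now $(x+1)/(n-x) = 5/3 \le 2$, which is exactly why the corollary begins at $n = 7$. The only remaining bookkeeping is to confirm the finitely many intermediate cases by computation and to verify that beyond the chosen crossover the analytic bound indeed guarantees $x \le (2n-1)/3$.
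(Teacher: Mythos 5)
Your proposal is correct and follows essentially the same route as the paper: both reduce the corollary to showing $x \le (2n-1)/3$ (so that the maximum in Theorem \ref{HDistribution} is $\sqrt{2}$), both invoke Proposition \ref{xval} to handle large $n$, and both fall back on finite numerical verification for the small cases down to $n = 7$. Your explicit check that $n=6$ fails (giving factor $\sqrt{5/2}$) and $n=7$ succeeds, and your acknowledgment that the analytic bound needs an explicit crossover before the finite check, are slightly more careful than the paper's appeal to ``examination of numerical data,'' but the argument is the same.
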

\begin{proof}
For $0 < x < n$, the derivative with respect to $x$ of $(x+1)/(n-x)$ is $(n+1)/(n-x)^2$,
 so this function is increasing on $(0, n)$. Solving for $x$ in $\sqrt{2} = \sqrt{(x+1)/(n-x)}$ 
 yields $x = (2n-1)/3$. Therefore, whenever $0 < x \leq (2n-1)/3$,
 \[
 \max \left( \sqrt{2}, \sqrt{\frac{x + 1}{n - x}} \right) = \sqrt{2}.
 \]
 Now let $y$ be the positive real solution to $y! = \sqrt{n!}$. As proven in Proposition \ref{xval},
 the solution can be expressed
 asymptotically by
 \[
 y = \frac n 2 \left( 1 + O \left( \frac 1 {\log n} \right) \right).
 \]
 If we let $x = \lfloor y \rfloor$, as in Theorem \ref{HDistribution}, then for all sufficiently large
 $n$, $x$ will be less than or equal to $(2n-1)/3$. Examination of numerical data shows that
 for $n \geq 7$, $\lfloor y \rfloor \leq \lfloor (2n-1)/3 \rfloor$.
\end{proof}

The above corollary says that there exists a subgroup $H$ within a constant factor of
 $\sqrt{n!}$ for
sufficiently large $n$. To find it, we may simply try all $k$ and $\ell$ with $1 \leq k+\ell \leq n$,
$\ell \geq 0$, and
choose the values that make $\ell k!$ closest to $\sqrt{n!}$. Charging $O(1)$ time to multiply integers,
this can be done in $O(n^2)$ time, which will be much less than the cost of enumerating $H$ or
its transversal.

Once these values are determined,
we may run a closure algorithm using the generators from Lemma \ref{subgroup} to enumerate the
elements of $H$ in $O(n|H|)$ time (the $O(n)$ factor is for representing the permutations
themselves). There is an additional factor of $2$ or $3$ overhead with this approach because
each permutation needs to be composed with the generating set before it can be determined all
elements have been enumerated.
This overhead can be avoided by
instead using the structure of $H$ to construct the elements explicitly. We briefly
give this algorithm.

\begin{algorithm}
\caption{Enumeration of the elements of $H$}
\label{Henumeration}
\begin{enumerate}
  \item Enumerate all elements of $S_k$. If $k = 0$, enumerate the identity.
  \item For each permutation from 1, 
  output its composition with all powers of $\sigma = (k+1\ k+2\ldots\ k+\ell)$. If $\ell = 0$,
  simply output the permutation from 1.
\end{enumerate}
\end{algorithm}

Before analyzing the algorithm, we make a few statements regarding our model of computation. 
For the purposes of this paper, we will charge $O(1)$ space for each integer and $O(1)$ time for 
accessing elements of an array. We will not charge space for the output of an algorithm.
In this model, we will represent each permutation using $O(n)$ space 
and may compose two permutations in $O(n)$ time.

\begin{lemma}
Algorithm \ref{Henumeration} correctly enumerates the elements of $H$ in time $O(n |H|)$ 
and space $O(n)$.
\end{lemma}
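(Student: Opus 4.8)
The plan is to verify the two claims in Algorithm \ref{Henumeration} separately: correctness, and the stated time and space bounds. For correctness, I would invoke the structural description from Lemma \ref{subgroup}. Every element of $H$ factors uniquely as a product of an element of $S_k$ (acting on $\{1,\ldots,k\}$) and a power of the $\ell$-cycle $\sigma$ (acting on the disjoint set $\{k+1,\ldots,k+\ell\}$), since these two parts move disjoint sets of points and hence commute. Step 1 enumerates the $k!$ elements of $S_k$, and Step 2 composes each with the $\ell$ distinct powers $\sigma^0,\sigma^1,\ldots,\sigma^{\ell-1}$ (or outputs it unchanged when $\ell=0$). Because the factorization is unique, this produces exactly $\ell\cdot k!$ distinct permutations when $\ell\ge 2$ and $k!$ permutations when $\ell\in\{0,1\}$, matching $|H|$ from Lemma \ref{subgroup}. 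I would handle the degenerate cases $k=0$ (where Step 1 emits only the identity) and $k=1$ explicitly, as the lemma already splits these out.

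For the time bound, I would account for the two steps under the stated model, in which a single permutation occupies $O(n)$ space and composition of two permutations costs $O(n)$ time. Step 1 enumerates $S_k$; a standard recursive or iterative permutation generator produces the $k!$ elements using $O(n)$ work per element (for instance, by maintaining a base image and updating it incrementally, or simply writing out each new permutation in $O(n)$ time), for a total of $O(n\cdot k!)$. Step 2 iterates over these $k!$ permutations and, for each, forms its composition with each of the $\ell$ powers of $\sigma$; each composition is an $O(n)$-time operation, giving $O(n\cdot \ell\cdot k!)$ total. Since $|H|=\ell\,k!$ when $\ell\ge 2$ and $|H|=k!$ when $\ell\le 1$, both steps are bounded by $O(n\,|H|)$, as claimed. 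Here I rely on the fact that the powers of $\sigma$ can themselves be generated on the fly in $O(n)$ incremental time each, so precomputing them adds only $O(n\ell)=O(n\,|H|)$ and does not dominate.

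For the space bound, I would note that we do not charge for output, so at any moment the algorithm need only store a constant number of length-$n$ permutations: the current element of $S_k$, the current power of $\sigma$, and the composed result being emitted. This is $O(n)$ working space. The one subtlety is to confirm that the $S_k$ enumerator itself runs in $O(n)$ auxiliary space rather than materializing all of $S_k$ at once; I would therefore assume (or briefly specify) a streaming generator for $S_k$, such as Heap's algorithm or the lexicographic ``next permutation'' routine, which emit elements one at a time in $O(n)$ space.

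I expect the main obstacle to be bookkeeping rather than any genuine difficulty: ensuring the boundary cases ($\ell=0,1$ and $k=0,1$) are each counted correctly so that the output is exactly $H$ with no omissions or repeats, and confirming that the $S_k$ enumeration is genuinely streaming so the $O(n)$ space claim holds. Neither requires real work once the disjoint-support commutation from Lemma \ref{subgroup} is in hand, since that lemma already supplies the unique factorization that makes the enumeration exhaustive and repetition-free.
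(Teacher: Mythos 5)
Your proposal is correct and follows essentially the same route as the paper's proof: correctness via the structure established in Lemma \ref{subgroup}, time $O(n\,\ell\,k!) = O(n|H|)$ via a streaming enumeration of $S_k$ composed with the powers of $\sigma$, and $O(n)$ space because only a constant number of length-$n$ permutations are held at once (the paper cites Holt for the streaming $S_k$ generator where you cite Heap's algorithm). Your added detail on unique factorization from the disjoint supports is a fine elaboration of what the paper leaves implicit in its one-line correctness claim.
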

\begin{proof}
For correctness, Lemma \ref{subgroup} shows the set returned is $H$.

To argue resource bounds, note that we may iterate through every permutation of $S_k$ in 
$O(n k!)$ time
and $O(n)$ space \citep[p. 112]{HOLT}.
For every permutation generated, we may then iterate through the
powers of $\sigma$ in total time $O(n \ell k!)$ and again space $O(n)$. 
Composing the pairs of permutations from 1 and 2 again takes total time $O(n \ell k!)$
and no additional space, thus in total the algorithm takes time $O(n \ell k!) = O(n |H|)$
 and space $O(n)$.
\end{proof}

\section{Constructing Transversals}
\label{transversals}

We now consider finding a transversal of $H$ in $S_n$.  Although
finding transversals can be complicated and require backtrack search
through the parent group $G$ \citep{HOLT}, we can take advantage of
having $G=S_n$.

We first give the following definition, as in \cite{HOLT}.

\begin{definition}
Let $B = [\beta_1,\ldots,\beta_k]$ be a base.
Define a partial ordering $\prec$ on elements of $\Omega$ by 
taking $\beta_i \prec \beta_{i+1}$ for all $1 \leq i < k$,
and $\beta_i \prec \alpha$, for every $\alpha \in \Omega$
not present in $B$.  We extend this to base images by
saying that for $g, h \in G$, $B^g \prec B^h$ if 
$g$ precedes $h$ in the lexicographic ordering on the base 
vectors.
\end{definition}

For our purposes, this is the natural ordering 
of the integers in $B$. Furthermore, this defines lexicographical 
ordering of permutations for base images 
$B^g$ and $B^h$, $g, h \in G$.

Our transversal construction will exploit the following lemma.

\begin{lemma}
\label{CosetOrdering}
Let $K < G \leq S_n$ and let 
$B = [\beta_1, \ldots, \beta_k]$ be a base of $G$. 
Then $g \in G$ is the $\prec$-least element of its coset $gK$ if 
and only if $\beta_j^g$ is the $\prec$-least element of its 
orbit in $K_{\beta_1^g, \ldots, \beta_{j-1}^g}$ for $1 \leq j \leq k$. 
\end{lemma}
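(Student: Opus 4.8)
The plan is to prove both directions by a \emph{first-difference} (greedy) comparison on base images, exploiting two facts established earlier: a base image uniquely determines its group element, and (in our setting) $\prec$ is a total order on $\Omega$, namely the natural order on the integers when $B = [1,\ldots,n]$. Throughout I would abbreviate $\gamma_i := \beta_i^g$, so that an arbitrary member $gh$ of the left coset $gK$ has base image $B^{gh} = [\gamma_1^h, \ldots, \gamma_k^h]$ (using the action convention $\beta_i^{gh} = (\beta_i^g)^h$), and I would keep in mind that $\gamma_j^h$ sweeps out exactly the orbit of $\gamma_j$ under a subgroup of $K$ as $h$ ranges over that subgroup.

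First I would handle the ``only if'' direction by contraposition. Suppose the orbit condition fails at some index $j$, i.e. there is $s \in K_{\gamma_1,\ldots,\gamma_{j-1}}$ with $\gamma_j^s \prec \gamma_j$. Since $s$ fixes $\gamma_1,\ldots,\gamma_{j-1}$ pointwise, the element $gs \in gK$ satisfies $\beta_i^{gs} = \gamma_i$ for $i < j$ and $\beta_j^{gs} = \gamma_j^s \prec \gamma_j = \beta_j^g$. Thus $B^{gs}$ agrees with $B^g$ in the first $j-1$ coordinates and is strictly smaller in coordinate $j$, so $B^{gs} \prec B^g$ and $g$ is not the $\prec$-least element of its coset.

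Next I would prove the ``if'' direction. Assume every coordinate condition holds and take an arbitrary $h \in K$. If $B^{gh} = B^g$ then $gh = g$ because base images determine group elements, so equality holds; otherwise let $j$ be the least index with $\gamma_j^h \ne \gamma_j$. Minimality of $j$ gives $\gamma_i^h = \gamma_i$ for $i < j$, so $h \in K_{\gamma_1,\ldots,\gamma_{j-1}}$ and hence $\gamma_j^h$ lies in the orbit of $\gamma_j$ under $K_{\gamma_1,\ldots,\gamma_{j-1}}$. The hypothesis that $\gamma_j$ is $\prec$-least in that orbit, combined with $\gamma_j^h \ne \gamma_j$, forces $\gamma_j \prec \gamma_j^h$, so $B^g \prec B^{gh}$. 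As $h$ was arbitrary, $g$ is the $\prec$-least element of $gK$.

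No single step is genuinely difficult; the main obstacle is bookkeeping rather than ideas. The point to get exactly right is the correspondence between right-multiplication inside the coset and orbits in the stabilizer chain: that ``$s$ fixes $\gamma_1,\ldots,\gamma_{j-1}$'' is literally the statement $s \in K_{\gamma_1,\ldots,\gamma_{j-1}}$, and that the action convention makes the first-difference index line up consistently in both implications. I would also flag the standing assumption that $\prec$ is total on the relevant orbits (guaranteed by the natural base $[1,\ldots,n]$), since ``the $\prec$-least element'' must be well defined; under a genuinely partial $\prec$ one would instead read the condition as $\gamma_j$ being a minimum of its orbit, and the same argument goes through verbatim.
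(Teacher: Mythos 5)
Your proof is correct and follows essentially the same route as the paper's: both directions rest on the first-difference lexicographic comparison, with the ``only if'' direction handled by exhibiting $gs \prec g$ from a failed orbit condition and the ``if'' direction by matching the first differing coordinate to an orbit of the pointwise stabilizer $K_{\gamma_1,\ldots,\gamma_{j-1}}$. Your version is if anything slightly tidier than the paper's, since you prove $g \preceq gh$ for arbitrary $h \in K$ (so $g$ is genuinely the minimum, not merely minimal) and you explicitly flag the totality of $\prec$ needed for ``$\prec$-least'' to be well defined.
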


In this lemma, $K_{\beta_1^g, \ldots, \beta_{j-1}^g}$ denotes
the subgroup of $K$ consisting of the elements that fix each 
of the elements listed as subscripts. (When $j=1$, this subgroup
is just $K$.) 

Although the result has been known since the work of Charles Sims,
we present a proof in the interest of being self-contained.
This lemma can be found (without proof) in \cite[p. 115]{HOLT}.

\begin{proof}
Suppose that $g$ satisfies the property given in Lemma \ref{CosetOrdering}. We must show $g$ is $\prec$-least in $gK$. Write:
\[
g = [\alpha_1, \alpha_2, \ldots, \alpha_k].
\]
Now suppose there exists some $h \in gK$ such that $h \prec g$. Write:
\[
h = [\gamma_1, \gamma_2, \ldots, \gamma_k].
\]
Since $h \in gK$, we can write $h = gk$ for some $k \in K$. Therefore we can think of $h$ as applying some element $k$ to $g$. Now, since $h \prec g$, there must be a first index $j$ such that $\gamma_j \prec \alpha_j$; so $\gamma_i = \alpha_i$ for all $1 \leq i < j$. Then $k$ must stabilize $\alpha_1, \ldots, \alpha_{j-1}$. Since $\gamma_j \prec \alpha_j$, $k$ must map $\alpha_j$ to $\gamma_j$, therefore $\gamma_j$ and $\alpha_j$ are in the same orbit in $K_{\alpha_1, \ldots, \alpha_{j-1}}$. But by assumption, $\alpha_j$ is the $\prec$-least such element in its orbit in $K_{\alpha_1, \ldots, \alpha_{j-1}}$. Therefore the element $h$ cannot exist and so $g$ is minimal in $gK$.

In the other direction, suppose $g$ does not satisfy the property in Lemma \ref{CosetOrdering}. Let $j$ be the first index such that $\alpha_j$ is not $\prec$-least in its orbit in $K_{\alpha_1, \ldots, \alpha_{j-1}}$. Then there must be some $k \in K$ that stabilizes $\alpha_1, \ldots, \alpha_{j-1}$ and maps $\alpha_j$ to some $\eta$ such that $\eta \prec \alpha_j$. Then $gk \prec g$. 
\end{proof}

We can apply Lemma \ref{CosetOrdering} to find base images that satisfy the property required to be $\prec$-least elements in their respective cosets. However, these base images might not necessarily correspond to elements of $G$ if $G$ is an arbitrary permutation group. Here we take advantage of the fact $G = S_n$. Every base image that satisfies Lemma \ref{CosetOrdering} corresponds to some permutation on $\{1, 2, \ldots, n\}$, which is necessarily an element of $S_n$. Thus, we may compute a transversal
of $H$ efficiently if know the orbit structure after stabilizing in $H$.

\begin{lemma}
\label{HStabilizers}
The subgroup $H$ is initially composed of at most three categories of orbits, which further split
after stabilizing the smallest integer in the following way:
\begin{enumerate}[(a)]
  \item Integers $1, 2, \ldots k$ are in an orbit such that after repeatedly stabilizing
  the smallest integer, the remaining integers continue to stay in a single orbit.
  \item Integers $k+1, k+2, \ldots, k+\ell$ are in an orbit such that after stabilizing $k+1$,
  integers $k+2, k+3, \ldots, k+\ell$ are all in their own orbits.
  \item Integers $k+\ell+1, k+\ell+2, \ldots, n$ are in their own orbits.
\end{enumerate}
If $k = 0$, (a) disappears. If $\ell = 0$, (b) disappears. If $k + \ell = n$, (c) disappears.
\end{lemma}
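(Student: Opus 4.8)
The plan is to exploit the disjoint-support structure of the generators of $H$. By Lemma \ref{subgroup}, $H$ is generated by $S_k$ acting on $\{1, \ldots, k\}$ together with the $\ell$-cycle $\sigma = (k+1\ k+2\ldots\ k+\ell)$ acting on $\{k+1, \ldots, k+\ell\}$. Since these two generating families have disjoint supports and each fixes everything outside its own block, $H$ is the internal direct product $S_k \times \langle\sigma\rangle$, and every element of $H$ decomposes uniquely as a permutation of $\{1,\ldots,k\}$ times a power of $\sigma$. First I would record the immediate consequence for orbits: $S_k$ is transitive on $\{1,\ldots,k\}$, $\langle\sigma\rangle$ is transitive on $\{k+1,\ldots,k+\ell\}$, and every integer in $\{k+\ell+1,\ldots,n\}$ is fixed by both generators. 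This yields exactly the three initial orbit categories (a), (b), and (c).

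Next I would observe that the direct-product structure is inherited by pointwise stabilizers, which is the key to tracking how the orbits split. For any set $P$ of already-stabilized points, the stabilizer $H_P$ factors as $(S_k)_{P \cap \{1,\ldots,k\}} \times \langle\sigma\rangle_{P \cap \{k+1,\ldots,k+\ell\}}$, since each factor ignores the block of the other. Consequently the orbits of $H_P$ on the remaining points are simply the union of the orbits contributed by each factor, so it suffices to analyze the two factors separately as we stabilize the smallest remaining integer in the base order $1, 2, \ldots, n$.

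For category (a), stabilizing $1, 2, \ldots, j$ leaves the symmetric group on $\{j+1, \ldots, k\}$ in the $S_k$ factor, which is transitive on those points; hence after each stabilization the remaining integers of the block persist as a single orbit, as claimed. For category (b), I would use that a single $\ell$-cycle generates a group acting \emph{regularly} on its $\ell$ points, so by orbit-stabilizer the stabilizer of any one of them is trivial; thus once $k+1$ is stabilized the factor $\langle\sigma\rangle$ collapses to the identity and $k+2, \ldots, k+\ell$ each become their own orbit. Category (c) consists of fixed points of $H$, so these integers remain singleton orbits throughout. Finally I would dispose of the degenerate cases: $k = 0$ removes the $S_k$ factor (so (a) disappears), $\ell = 0$ removes $\sigma$ (so (b) disappears), and $k + \ell = n$ leaves no tail (so (c) disappears).

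I do not expect a serious obstacle here: the entire argument rests on the disjointness of the two supports, which reduces the orbit analysis to two classical facts — point stabilizers in $S_k$ are smaller symmetric groups, and a cyclic group generated by a full-length cycle is regular. The only places demanding care are the bookkeeping of the degenerate small-$k$ and small-$\ell$ cases (including the borderline $\ell = 1$, where block (b) is a single fixed point and the splitting claim holds vacuously) and the verification that "stabilizing the smallest integer" in the global base order interacts correctly with the block decomposition.
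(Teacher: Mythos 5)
Your proof is correct and follows essentially the same route as the paper: read off the three initial orbit categories from the disjoint supports of the generators, then track how stabilizing the smallest integer affects each block separately (point stabilizers in $S_k$ are smaller symmetric groups; $\langle\sigma\rangle$ acts regularly, so stabilizing $k+1$ kills every nontrivial power of $\sigma$; fixed points stay fixed). The only difference is presentational — you make explicit the direct-product decomposition $H \cong S_k \times \langle\sigma\rangle$ and the factorization of pointwise stabilizers across the two blocks, which the paper's proof uses implicitly when it asserts that stabilizing an integer in $\{1,\ldots,k\}$ leaves a copy of $S_{k-1}$ and that no nontrivial power of $\sigma$ survives stabilizing $k+1$.
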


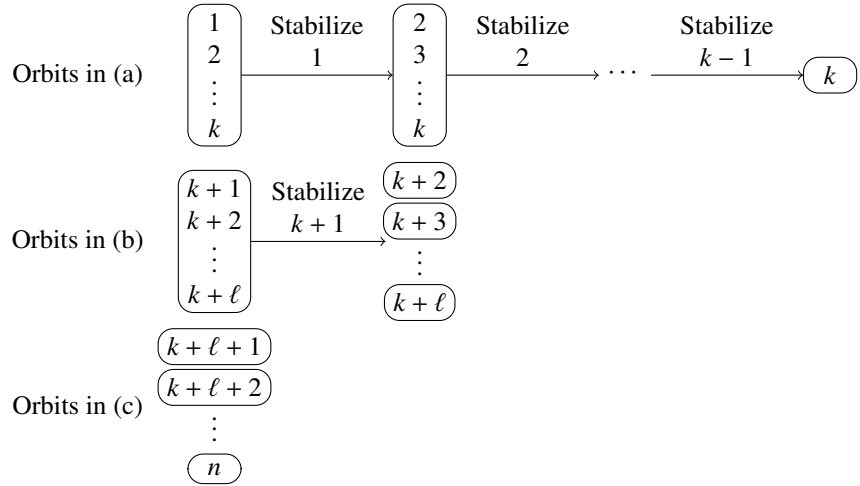
\begin{figure}[!htb]
\begin{center}
\begin{tikzpicture}[scale=.9]


\node at (-2, 1.75) {Orbits in (a)};
\node at (-2, -.7) {Orbits in (b)};
\node at (-2, -3.15) {Orbits in (c)};

\node[draw, rectangle,align=center, rounded corners=.2cm, minimum width=.7cm] (a1) at (0, 1.75) {$1$\\ $2$\\ $\vdots$\\ $k$};
\node[draw, rectangle,align=center, rounded corners=.2cm, minimum width=.7cm] (a2) at (3, 1.75) {$2$\\ $3$\\ $\vdots$\\ $k$};
\path (a1) edge[->] node[above,black,align=center] {Stabilize\\ $1$} (a2);
\node (a3) at (6, 1.75) {$\cdots$};
\path (a2) edge[->] node[above,black,align=center] {Stabilize\\ $2$} (a3);
\node[draw, rectangle,align=center, rounded corners=.2cm, minimum width=.7cm] (a4) at (9, 1.75) {$k$};
\path (a3) edge[->] node[above,black,align=center] {Stabilize\\ $k-1$} (a4);

\node[draw, rectangle,align=center, rounded corners=.2cm] (b1) at (0, -.7) {$k+1$\\ $k+2$\\ $\vdots$\\ $k+\ell$};
\node[draw, rectangle,align=center, rounded corners=.2cm] (b2a) at (3, .2) {$k+2$};
\node[draw, rectangle,align=center, rounded corners=.2cm] (b2a) at (3, -.4) {$k+3$};
\node (b2c) at (3, -.9) {$\vdots$};
\node[draw, rectangle,align=center, rounded corners=.2cm] (b2a) at (3, -1.6) {$k+\ell$};
\path (b1) edge[->] node[above,black,align=center] {Stabilize\\ $k+1$} (2.5, -.7);

\node[draw, rectangle,align=center, rounded corners=.2cm] (c1) at (0, -2.25) {$k+\ell+1$};
\node[draw, rectangle,align=center, rounded corners=.2cm] (c2) at (0, -2.85) {$k+\ell+2$};
\node (c3) at (0, -3.35) {$\vdots$};
\node[draw, rectangle,align=center, rounded corners=.2cm,minimum width=.7cm] (c4) at (0, -4.05) {$n$};


\end{tikzpicture}
\caption{The orbit structure in $H$}
\end{center}
\end{figure}

\begin{proof}
The subgroup $H$ is generated by $S_k$ and $(k+1\ k+2\ldots\ k+\ell)$.
Therefore, the integers $1, 2, \ldots k$ are in an orbit, $k+1, k+2, \ldots k+\ell$ are in an orbit,
and $k+\ell+1, k+\ell+2, \ldots, n$ are in their own orbits. If $k = 0$, there are no integers in this
first category. If $\ell = 0$, there are no integers in this second category. If $k + \ell = n$, there
are no integers in this third category.

Now consider the first orbit in category (a).
When we stabilize any integer, we are left with a subgroup isomorphic to $S_{k-1}$.
All other integers remain in the same orbit, and the structure will repeat itself if we continue
to stabilize integers.

Now consider orbit (b). This orbit comes from powers of $\sigma = (k+1\ k+2\ldots\ k+\ell)$. Every power
of $\sigma$ moves every integer in $\{k+1, k+2, \ldots, k+\ell\}$. Therefore, when
we stabilize $k+1$, no powers of $\sigma$ will be in this stabilizer. Thus,
after stabilizing $k+1$, integers $k+2, k+3, \ldots, k+\ell$ will be in their own orbits.

Once an integer is in its own orbit, it will remain in its own orbit after further stabilizations.
\end{proof}

Using Lemma \ref{CosetOrdering} and Lemma \ref{HStabilizers}, the algorithm to generate a transversal
of $H$ in $S_n$ is relatively straightforward. We wish to backtrack through the orbits in all
possible ways such that each base image generated through the procedure has the property of
being $\prec$-least for its respective coset.
For our subgroup, this means the integers $\{1, 2, \ldots, k\}$ must appear in this
order and $k+1$ must appear before any of $\{k+2, k+3, \ldots, k+\ell\}$.
Any backtracking procedure that enumerates all base images with this property suffices to compute
a transversal of $H$ in $S_n$. We give one such backtracking procedure below.

\begin{algorithm}[H]
\caption{Transversal of $H$ in $S_n$}
\label{Htransversal}
\begin{enumerate}
  \item Let $A = [1, 2, \ldots, k]$. If $k=0$, $A = []$.
  \item Enumerate all permutations on $\{k+1, k+2, \ldots, k+\ell\}$ that start with integer $k+1$.
           If $\ell = 0$, enumerate an ``empty'' permutation, $[]$.
  \item For each permutation from 2,
           enumerate all permutations on $\{k+\ell+1, k+\ell+2, \ldots, n\}$.
           If $k+\ell = n$, enumerate an ``empty'' permutation, $[]$.
  \item For each pair from steps 2 and 3, combine them in all possible ways with each other
           and $A$ such that the integers from steps 1, 2, and 3 remain in the same relative order.
           For each combination, output the resulting permutation.
\end{enumerate}
\end{algorithm}

\begin{lemma}
\label{Htransversal-lemma}
Algorithm \ref{Htransversal} correctly enumerates a transversal of $H$ in $S_n$ in time
$O(n |S_n : H|)$ and space $O(n)$.
\end{lemma}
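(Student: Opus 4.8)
The plan is to prove correctness by showing that Algorithm~\ref{Htransversal} outputs exactly the set of $\prec$-least coset representatives, and then to bound its resources by counting those outputs. By Lemma~\ref{CosetOrdering}, $g$ is the $\prec$-least element of $gH$ iff, reading its base image $[1^g,\ldots,n^g]$ from left to right, each value $j^g$ is the $\prec$-least element of its orbit under $H_{1^g,\ldots,(j-1)^g}$. Because $G=S_n$, \emph{every} length-$n$ sequence meeting this condition is the base image of an actual permutation, so it suffices to characterize which base images satisfy it and check that the algorithm produces precisely those. Since each coset has a unique $\prec$-least representative, enumerating exactly these sequences yields a transversal automatically.

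First I would translate the abstract condition into concrete constraints on the three value-regions of Lemma~\ref{HStabilizers}, namely $\{1,\dots,k\}$ (region A), $\{k+1,\dots,k+\ell\}$ (region B), and $\{k+\ell+1,\dots,n\}$ (region C); note $H$ never mixes regions, so a value's stabilizer-orbit stays inside its region. For region A, before any region-A value is placed the stabilizer still contains all of $S_k$, forcing the first such value to be $1$; by Lemma~\ref{HStabilizers}(a) the surviving orbit is $\{2,\dots,k\}$, forcing $2$ next, and so on, so region-A values must appear in increasing order. For region B the stabilizer still contains $\langle\sigma\rangle$ until a region-B value is fixed, so the first region-B value must be $k+1$; the key point is that $\sigma^t$ fixes $k+1$ only when $t\equiv 0\pmod{\ell}$, so once $k+1$ is placed the $\langle\sigma\rangle$-component of the stabilizer is trivial and, by Lemma~\ref{HStabilizers}(b), the remaining values $k+2,\dots,k+\ell$ become singleton orbits and may appear in \emph{any} order. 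Region-C values are singletons throughout and hence unconstrained, and the positions (the interleaving of the three regions) form a coset invariant and are therefore free. These are exactly the sequences produced by steps~1--4: step~1 fixes region A in increasing order, step~2 ranges over all region-B orders beginning with $k+1$, step~3 over all region-C orders, and step~4 over all order-preserving interleavings. As a check, the number of outputs is $\binom{n}{k,\ell,n-k-\ell}\,(\ell-1)!\,(n-k-\ell)! = n!/(\ell\,k!) = |S_n:H|$, confirming one representative per coset; the boundary cases $k=0$, $\ell=0$, and $k+\ell=n$ are handled by the empty-block conventions.

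For the resource bounds, the output consists of $|S_n:H|$ permutations. Using a standard permutation generator (as in \citep[p.~112]{HOLT}) the region-B and region-C orders are each advanced in amortized $O(n)$ time, and the order-preserving interleavings can likewise be generated with $O(n)$ work per step; assembling each output base image costs $O(n)$. Since the nested loops iterate over a product set of size exactly $|S_n:H|$, the total time is $O(n\,|S_n:H|)$. For space, at any instant we store only block $A$, the current region-B and region-C permutations, the current output array, and the $O(n)$ enumerator state, each of size $O(n)$; the output is streamed and not charged, giving $O(n)$ total.

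The main obstacle is the correctness translation, and within it the region-B analysis: one must see that minimality forces only the single leader $k+1$ while leaving the remaining $\ell-1$ region-B values completely free, which is exactly why step~2 must enumerate all $(\ell-1)!$ orders rather than a single one. Verifying that the order-preserving interleaving of step~4 realizes every coset's position pattern exactly once (cleanest via the count above) is the other point requiring care; the complexity bounds are then routine.
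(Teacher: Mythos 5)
Your proposal is correct and takes essentially the same approach as the paper: both use Lemmas \ref{CosetOrdering} and \ref{HStabilizers} to reduce correctness to the condition that $1,\ldots,k$ appear in increasing order and $k+1$ precedes $k+2,\ldots,k+\ell$, confirm via the count $n!/(\ell\, k!) = |S_n : H|$ that exactly one representative per coset is produced, and obtain the resource bounds from $O(n)$-per-output permutation generation and interleaving with streamed (uncharged) output. The only difference is presentational: you carry out the orbit-stabilizer translation inside the proof, while the paper performs it in the discussion preceding Algorithm \ref{Htransversal} and lets the proof verify the algorithm against the resulting ordering condition.
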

\begin{proof}
Every permutation generated in the above procedure respects that integers $\{1, 2, \ldots, k\}$
appear in this same order and that $k+1$ appears before any of $\{k+2, k+3, \ldots, k+\ell\}$.
Furthermore, every permutation that respects this order is present in the returned set. To double check,
we may count the number of permutations generated from the above procedure.

Steps 2 and 3 enumerate $(\ell - 1)!$ and $(n - (k+\ell))!$ permutations, respectively.
For each pair from these steps, we merge lists in all possible ways while still respecting relative
order within each list. Let
\[
\multiset{x}{y} = {x + y - 1 \choose y}
\]
denote the number of $y$-multisets taken from an $x$-set.
Then, using a stars and bars argument, the number of ways to merge lists while still
respecting relative order is
\[
\multiset{k+1}{\ell} \multiset{k+\ell+1}{n-(k+\ell)} = {k + \ell \choose \ell} {n \choose n-(k+\ell)}.
\]
Putting it together, we have
\begin{align*}
|S_n : H| &= (\ell-1)! (n-(k+\ell))! {k + \ell \choose \ell} {n \choose n-(k+\ell)}\\
&= (\ell-1)! (n-(k+\ell))! \frac{(k+\ell)!}{\ell!k!} \frac{n!}{(n-(k+\ell))!(k+\ell)!}\\
&= \frac{n!}{\ell k!},
\end{align*}
which is exactly what we expect.

To prove resource bounds, observe that for steps 2 and 3, we may use a procedure to iterate
through permutations for a total time cost of $O(n(\ell-1)!(n-(k+\ell)))!$ and space cost of $O(n)$.
To perform step 4, we may use a backtracking procedure to output all possible combinations
in space $O(n)$ and with an additional factor of $n$ time cost on the number of elements produced,
for a total time cost of $O(n |S_n : H|)$ and total space cost of $O(n)$.
\end{proof}

Finally, we note that the property exploited in computing the transversal was that every base image found directly via stabilizers and orbits is necessarily an element of the parent group $G$ if $G = S_n$. This observation can be combined with black-box orbit, stabilizer, and base changing methods as discussed in \cite{HOLT} to compute a transversal of an arbitrary permutation group $K < S_n$ efficiently. We will not discuss the details here, nor give exact asymptotic guarantees. For more information, consult the transversal algorithms discussed in Holt's book.

\section{The Main Result}
We have the following theorem:

\begin{theorem}
\label{AB}
We can compute sets $A$ and $B$ such that $AB = S_n$ with 
$\max(|A|, |B|) = \Theta(\sqrt{n!})$. 
The computation can be done deterministically in time $O(n \sqrt{n!})$ and space $O(n)$.
\end{theorem}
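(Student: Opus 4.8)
The plan is to assemble the machinery built up in the previous sections; this is a synthesis theorem, so the work lies in checking that every piece fits within the stated budgets. First I would invoke Corollary \ref{optimal} to obtain, for $n \geq 7$, a subgroup $H \leq S_n$ with $\frac{\sqrt{n!}}{\sqrt 2} \leq |H| \leq \sqrt{2n!}$, so that $|H| = \Theta(\sqrt{n!})$. Following the discussion after that corollary, the defining parameters $k$ and $\ell$ can be located by scanning all pairs with $1 \leq k+\ell \leq n$ and keeping the one that makes $\ell k!$ closest to $\sqrt{n!}$; charging $O(1)$ per integer multiplication, this costs $O(n^2)$ time, which is negligible against the targets.

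Next I would set $B = H$ and let $A$ be a left transversal of $H$ in $S_n$. Since a subgroup together with a transversal forms a perfect splitting set (Section 6), every $g \in S_n$ is uniquely of the form $ab$ with $a \in A$ and $b \in B$, giving $AB = S_n$. For the cardinalities, $|B| = |H| = \Theta(\sqrt{n!})$ directly, while $|A| = |S_n : H| = n!/|H|$; combining this with the bounds on $|H|$ yields $\frac{\sqrt{n!}}{\sqrt 2} \leq |A| \leq \sqrt 2 \sqrt{n!}$, so $|A| = \Theta(\sqrt{n!})$ as well, and hence $\max(|A|,|B|) = \Theta(\sqrt{n!})$.

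For the resource bounds I would enumerate $B = H$ using Algorithm \ref{Henumeration}, which runs in time $O(n|H|) = O(n\sqrt{n!})$ and space $O(n)$, and enumerate $A$ using Algorithm \ref{Htransversal}, which by Lemma \ref{Htransversal-lemma} runs in time $O(n|S_n : H|) = O(n\sqrt{n!})$ and space $O(n)$. Adding the $O(n^2)$ cost of finding $k$ and $\ell$, the total is $O(n\sqrt{n!})$ time and $O(n)$ space, since $n^2 = o(n\sqrt{n!})$.

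I do not expect a genuine obstacle here, as the theorem is a packaging of Corollary \ref{optimal}, Algorithm \ref{Henumeration}, and Lemma \ref{Htransversal-lemma}. The only loose end is the hypothesis $n \geq 7$ in Corollary \ref{optimal}: for the finitely many values $n < 7$ the asymptotic claim is unaffected (one may construct $A,B$ by brute force), so the result holds for all $n$. The main thing to verify carefully is simply that no individual step exceeds the budget — in particular that the $O(n^2)$ parameter search and the per-element $O(n)$ overheads in both enumeration algorithms are absorbed into $O(n\sqrt{n!})$.
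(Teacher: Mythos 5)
Your proposal is correct and takes essentially the same approach as the paper's own proof: obtain $H$ from Corollary \ref{optimal}, take $B = H$ and $A$ a transversal, and enumerate them via Algorithm \ref{Henumeration} and Algorithm \ref{Htransversal} within the stated time and space bounds. You actually supply slightly more detail than the paper does (the $O(n^2)$ search for $k$ and $\ell$, the explicit index bound $|A| = n!/|H| = \Theta(\sqrt{n!})$, and the brute-force handling of $n < 7$), all of which is consistent with the paper's surrounding discussion.
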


\begin{proof}
Corollary \ref{optimal} states that for $n \geq 7$, we may find a subgroup $H$ within a
factor of $\sqrt{2}$ from $\sqrt{n!}$. Using algorithm \ref{Henumeration}, it can
be enumerated deterministically in time $O(n|H|)$ and space $O(n)$. Using Algorithm \ref{Htransversal},
its transversal can be found and enumerated deterministically in time $O(n|S_n : H|)$
and again space $O(n)$.
Therefore, we can find sets of permutations $A$ and $B$ 
deterministically in time $O(n \sqrt{n!})$ and space $O(n)$
such that every permutation of $S_n$ can 
be represented as a product $ab$, $a \in A$ and 
$b \in B$.
\end{proof}

This lemma implies that the group action discrete logarithm problem in the symmetric group can be 
solved deterministically in about $O(n \sqrt{n!})$ time and $O(n \sqrt{n!})$ space, where the specifics 
depend on the ability
to hash or compare elements of $S$ generated for the collision procedure.
We note that while randomization may be used in the analysis of such hashing functions, 
the algorithm itself will always produce correct results.

Furthermore, regarding the space cost of solving the group action discrete logarithm problem,
we require that one of the two sets
$$
\{r^a : a \in A\} \hbox{ or } \{s^{b^{-1}} : b \in B\}
$$
be stored in memory. By storing the smaller set, we may take advantage of a time-space tradeoff.
Since the bounds in Corollary \ref{optimal}
hold for any choice of $m \le \sqrt{n!}$, we may choose $H$ within a $\sqrt{2}$ factor
of any size $m \le \sqrt{n!}$. Then
we may solve the group action discrete logarithm problem
 in the symmetric group deterministically
in space $O(m)$ and time $O(\frac{n!}{m})$ for any choice of $m \le \sqrt{n!}$.

The computation of sets $A$ and $B$, in comparison, will also take time $O(n \sqrt{n!})$ by the 
randomized approach and time $O(n^{1.5} \sqrt{n!})$ by bidirectional collision detection.
In the former case, it is informative to know for what error bounds it becomes more efficient to
use our deterministic method.

Our analysis of the randomized approach, applied to $S_n$,
shows that by picking $k$ random permutations for $A$ and $B$,
$\Pr{ g \notin AB } \le e^{-k^2/n!}$. As $n!$ grows large, this inequality becomes increasingly tighter.
Taking it as a baseline for the probability of missing a particular permutation in $AB$, if we set
$k = \sqrt{2n!}$, this probability is
\[
\Pr{ g \notin AB } \approx e^{-2} \approx .1353.
\]
Thus, it is more efficient to use our deterministic method if we want more than about 86.47\%
accuracy; that is, when the algorithm determines no such $g$ exists, this response is correct about
86.47\% of the time.

On another practical note, examination of numerical data shows the best $H$ for 
$n = 1$ to $30$ is, at worst, off from $\sqrt{n!}$
by a factor of $\sqrt{2}$ when $n = 2$ (as it must) and approximately
$1.2883$ when $n = 28$. On average, from $n = 1$ to $30$, the best choice of $H$
has size only off by a factor of approximately $1.1201$ of $\sqrt{n!}$.
Combining these facts with the simplicity of Algorithm \ref{Henumeration} and Algorithm
\ref{Htransversal}, we consider our approach a viable practical alternative to the randomized one.

\section{Acknowledgements}

This research was supported in part by NSF: CCF-1420750.
We'd like to thank Derek Holt, Gene Cooperman, and L\'{a}szl\'{o} Babai for correspondence on computing
transversals in permutation groups.  We additionally thank the stackexchange 
community for their input on subgroup choices, general help with computational group theory,
and innovative latex solutions. Finally, we thank the anonymous reviewers
for their constructive feedback during the review process.

\bibliographystyle{elsarticle-harv}


\end{document}